\let\doendproof\endproof
\renewcommand\endproof{~\hfill$\qed$\doendproof}
\newcommand{\simpqord}{{\sc Simultaneous PQ-Ordering}\xspace}
\newcommand{\simfpqord}{{\sc Simultaneous FPQ-Or\-de\-ring}\xspace}
\newcommand{\onefixed}{{\sc $1$-Fixed Constrained Planarity}\xspace}
\newcommand{\rowcolumn}{{\sc Row-Column Independent NodeTrix Planarity}\xspace}
\newcommand{\rcintplanarity}{{\sc RCI-NT Planarity}\xspace}
\newcommand{\rci}{{\sc RCI-NT}\xspace}
\newcommand{\nodetrix}{{\sc NodeTrix}\xspace}
\newcommand{\nodetrixplanarity}{{\sc NodeTrix Planarity}\xspace}
\newcommand{\polylink}{{\sc PolyLink}\xspace}
\newcommand{\polylinkplanarity}{{\sc PolyLink Planarity}\xspace}
\DeclareMathOperator{\fixed}{fixed}
\newcommand{\emb}{\mathcal{E}}
\newcommand{\source}{\mathrm{source}}
\newcommand{\sol}{\mathrm{sol}}
\begin{document}
	
\title{Simultaneous FPQ-Ordering and\\Hybrid Planarity Testing\thanks{Work partially supported by: MIUR, grant 20174LF3T8 AHeAD: efficient Algorithms for HArnessing networked Data; Dip. Ingegneria Univ. Perugia, grants RICBASE2017WD-RICBA18WD: ``Algoritmi e sistemi di analisi visuale di reti complesse e di grandi dimensioni''; German Science Found. (DFG), grant Ru 1903/3-1.}
}
\author{Giuseppe Liotta\inst{1}, Ignaz Rutter\inst{2}, Alessandra Tappini\inst{1}}

\date{}

\institute{Dipartimento di Ingegneria, Universit\`a degli Studi di Perugia, Italy
\email{giuseppe.liotta@unipg.it,}
\email{alessandra.tappini@studenti.unipg.it}
\and
Department of Computer Science and Mathematics, University of Passau, Germany
\email{rutter@fim.uni-passau.de}
}

\maketitle

\pagestyle{plain}

\begin{abstract}
We study the interplay between embedding constrained planarity and hybrid planarity testing. We consider a constrained planarity testing problem, called \onefixed, and prove that this problem can be solved in quadratic time for biconnected graphs. Our solution is based on a new definition of fixedness that makes it possible to simplify and extend known techniques about \simpqord. We apply these results to different variants of hybrid planarity testing, including a relaxation of \nodetrixplanarity with fixed sides, that allows rows and columns to be independently permuted.

\end{abstract}

\section{Introduction}

A \emph{flat clustered graph} $(G,S)$ consists of a graph $G$ and a set $S$ of vertex disjoint subgraphs of $G$ called \emph{clusters}. An edge connecting two vertices in different clusters is an \emph{inter-cluster edge} while an edge with both end-vertices in a same cluster is an \emph{intra-cluster edge}. A \emph{hybrid representation} of $(G,S)$ is a drawing of the graph that adopts different visualization paradigms to represent the clusters and to represent the inter-cluster edges. For example, Fig.~\ref{fi:intro-a} depicts a flat clustered graph and Fig.~\ref{fi:intro-b} shows a \nodetrix representation of this graph.

A \emph{\nodetrix representation} is a hybrid representation of a flat clustered graph where the clusters are depicted as adjacency matrices and the inter-cluster edges are drawn according to the node-link paradigm.
\nodetrix representations have been introduced to visually explore non-planar networks by Henry et al.~\cite{hfm-dhvsn-07} in one of the most cited papers of the InfoVis conference~\cite{citevis}. They have been intensively studied in the last few years, see e.g.~\cite{bbdlpp-valg-11,bdlg-ckmepd-19,ddfp-cnrcg-jgaa-17,dlpt-ntptsc-19}.


\polylink representations are a generalization of \nodetrix representations. In a \polylink representation every vertex of each cluster has two copies that lie on opposite sides of a convex polygon (in a \nodetrix representation the polygon is a square). For example, Fig.~\ref{fi:intro-c} shows a planar \polylink representation of the graph of Fig.~\ref{fi:intro-a}.

\emph{Intersection-link representations} are another example of hybrid representations: Each vertex of $(G,S)$ is a simple polygon and two polygons overlap if and only if there is an intra-cluster edge connecting them~\cite{addfpr-ilrg-17,aehkklnt-tcipap-gd18}. Fig.~\ref{fi:intro-d} is an intersection-link representation with unit~squares.\\
\indent Given a flat clustered graph $(G,S)$ and a hybrid representation paradigm, it makes sense to ask whether $(G,S)$ is \emph{hybrid planar}, that is, whether $(G,S)$ admits a drawing in the given paradigm such that no two inter-cluster edges cross. In general terms, hybrid planarity testing is a more challenging problem than ``traditional'' planarity testing. Hybrid representations allow more than one copy for each vertex, which facilitates the task of avoiding edge crossings but makes the problem of testing the graph for planarity combinatorially more~complex.\\
\begin{figure}[tb]
	\centering
	\subfigure[\label{fi:intro-a}{}]
	{\includegraphics[width=.33\textwidth,page=11]{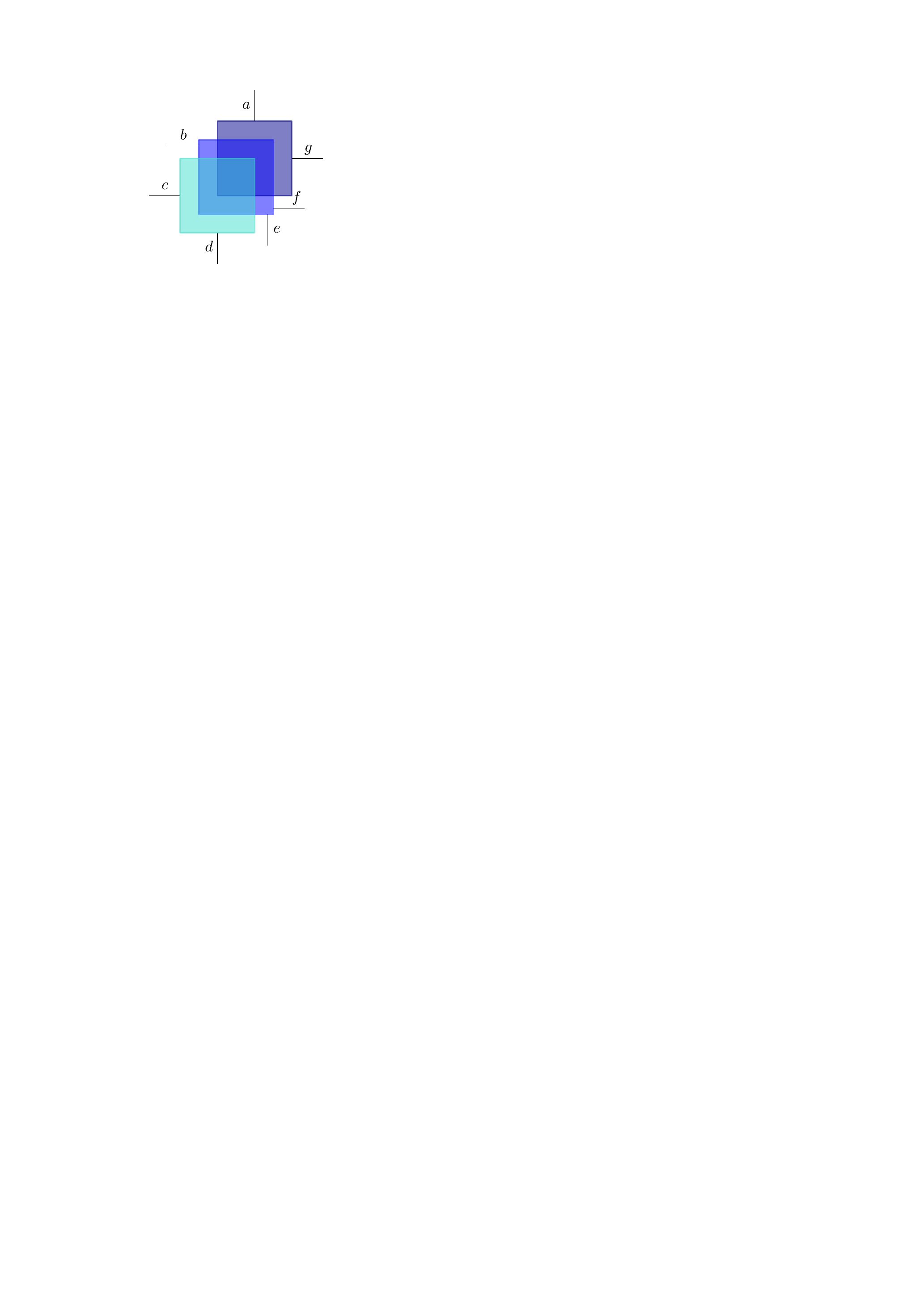}}
	\hfil
	\subfigure[\label{fi:intro-b}{}]
	{\includegraphics[width=.37\textwidth,page=12]{intersection-link-NodeTrix}}
	\hfil 	
	\subfigure[\label{fi:intro-c}{}]
	{\includegraphics[width=.37\textwidth,page=15]{intersection-link-NodeTrix}}
	\hfil 	
	\subfigure[\label{fi:intro-d}{}]
	{\includegraphics[width=.37\textwidth,page=13]{intersection-link-NodeTrix}}
	\caption{(a) A flat clustered graph $(G,S)$. Clusters $S_1$ and $S_2$ are highlighted. (b) A \nodetrix representation of $(G,S)$. (c) A \polylink representation of $(G,S)$. (d) An intersection-link representation of $(G,S)$.}
	\label{fi:intro}
\end{figure}
%
%
\indent Hybrid planarity testing can be studied both in the ``fixed sides scenario'' and in the ``free sides scenario''.
Let $e=(u,v)$ be an inter-cluster edge where $u$ is a vertex of cluster $C_u \in S$ and $v$ is a vertex of cluster $C_v \in S$. The fixed sides scenario specifies the sides of the geometric objects representing $C_u$ and $C_v$ to which edge $e$ is incident; the free sides scenario allows the algorithm to choose the sides of incidence of edge $e$. Note that what makes the problem challenging even in the fixed sides scenario is that one can permute the different copies of the vertices
along the side to which $e$ is incident.
For example, in the \nodetrixplanarity testing with fixed sides it is specified whether $e$ is incident to the top, bottom, left, or right copy of $u$ in the matrix representing $C_u$, and to the top, bottom, left, or right copy of $v$ in the matrix representing $C_v$.


This paper studies different variants of hybrid planarity testing in the fixed sides scenario. It adopts a unified approach that models these problems as instances of a suitably defined constrained planarity testing problem on a graph~$G$. The constrained planarity problem specifies for each vertex $v$ which cyclic orders for the edges of $G$ incident to $v$ are allowed. Choosing an order for a vertex of~$G$ influences the allowed orders for other vertices of $G$; such dependencies between different allowed orders are expressed by a directed acyclic graph
(DAG)
whose nodes are FPQ-trees (a variant of PQ-trees). Our contribution is as follows:

\begin{itemize}
	
	\item We introduce and study \onefixed and show that this problem can be solved in quadratic time for biconnected graphs, by modeling it as an instance of \simfpqord. \onefixed generalizes the partially PQ-constrained planarity testing problem studied by Bläsius and Rutter~\cite{br-spoace-16}. Our solution exploits a new definition of fixedness that simplifies and extends results of~\cite{br-spoace-16}.
	
	\item We show that a relaxation of \nodetrixplanarity with fixed sides, that allows the rows and the columns of the matrices to be independently permuted, can be modeled as an instance of \onefixed, and hence it can be solved in quadratic time if the multi-graph obtained by collapsing the clusters to single vertices is biconnected. It may be worth recalling that \nodetrixplanarity with fixed sides is NP-complete in general, but it is linear-time solvable if the rows and the columns of each matrix cannot be permuted~\cite{ddfp-cnrcg-jgaa-17}. Therefore it makes sense to further explore the conditions under which the problem is polynomially tractable.
	
	\item We introduce \polylink representations and we show that biconnected instances of \polylinkplanarity with fixed sides can be solved in quadratic time. As a byproduct, we obtain that a special instance of intersection-link planarity, called \emph{clique planarity with fixed sides}, can be solved in quadratic time. Note that clique planarity is known to be NP-complete in general~\cite{addfpr-ilrg-17}.
	We remark that \polylinkplanarity becomes equivalent to \nodetrixplanarity if the polygons have maximum size four and each side is associated with the same set of vertices.
\end{itemize}

The rest of the paper is organized as follows. Section~\ref{se:preliminaries} reports preliminary definitions. In Section~\ref{se:fixedness} we study the \onefixed testing problem and show that it can be solved in quadratic time for biconnected graphs. In Section~\ref{se:hybrid} we model the \rowcolumn testing problem as an instance of \onefixed; we also introduce the notion of \polylink representations and we show their relation with other hybrid representation paradigms.

\section{Preliminaries}\label{se:preliminaries}

\noindent \textbf{PQ-trees:} A \emph{PQ-tree} is a data structure that represents a family of permutations on a set of elements~\cite{bl-tcopiggpupa-76}. In a PQ-tree, each element is represented by a leaf node, and each non-leaf node is either a \emph{P-node} or a \emph{Q-node}. The children of a P-node can be arbitrarily permuted, while the order of the children of a Q-node is fixed up to a reversal.
Three main operations are defined on PQ-trees~\cite{br-spoace-16,bl-tcopiggpupa-76}. Let $T$ be a PQ-tree and let $L$ be the set of its leaves. Given $S \subseteq L$, the \emph{projection of $T$ to $S$}, denoted as $T|_S$,
is a PQ-tree $T'$ that represents the orders of $S$ allowed by $T$, such that $T'$ contains only the leaves of $T$ that belong to $S$.
$T'$ is obtained form $T$ by removing all leaves not in $S$ and simplifying the result, where simplifying means, that former inner nodes now having degree 1 are removed iteratively and that degree-2 nodes together with both incident edges are iteratively replaced by single edges.
The \emph{reduction of $T$ with $S$,} denoted as $T+S$,
is a PQ-tree $T'$ that represents only the orders represented by $T$ where the leaves of $S$ are consecutive.
A Q-node in $T+S$ can determine the orientation of several Q-nodes of $T$, while if we consider a P-node $\mu'$ in $T+S$, there is exactly one P-node $\mu$ in $T$ that depends on $\mu'$. We say that $\mu'$ \emph{stems from} $\mu$.
Given two PQ-trees $T_1$ and $T_2$, the \emph{intersection of $T_1$ and $T_2$},
denoted as $T_1 \cap T_2$,
is a PQ-tree $T'$ representing the orders of $L$ represented by both $T_1$ and $T_2$.
If $T_1$ and $T_2$ have the same leaves, their intersection is obtained by applying to $T_2$ a sequence of reductions with subsets of leaves whose orders are given by~$T_1$~\cite{br-spoace-16}.

\noindent \textbf{Simultaneous PQ-Ordering:} An instance of \simpqord \cite{br-spoace-16} is a DAG of PQ-trees that establishes relations between each parent node and its children nodes. Informally, the DAG imposes that the order of the leaves of a parent node must be ``in accordance with'' the order of the leaves of its children.
More formally, let $N=\{T_1,\dots, T_k\}$ be a set of PQ-trees whose leaves are $L(T_1), \dots, L(T_k)$, respectively.
Let $\mathcal{I} = (N,Z)$ be a DAG with vertex set $N$ and such that every arc in $Z$ is a triple $(T_i, T_j; \varphi)$ where $T_i$ is the tail vertex, $T_j$ is the head vertex, and $\varphi : L(T_j) \rightarrow L(T_i)$ is an injective mapping from the leaves of $T_j$ to the leaves of $T_i$ ($1 \le i,j \le k$). Given two cyclic orders $O_i$ and $O_j$ defined by $T_i$ and $T_j$, respectively, we say that $O_i$ \emph{extends} $\varphi(O_j)$ if $\varphi(O_j)$ is a suborder of $O_i$.
The \simpqord problem asks whether there exist cyclic orders $O_1, \dots O_k$ of $L(T_1), \dots, L(T_k)$, respectively, such that for each arc $(T_i, T_j; \varphi) \in Z$, $O_i$ extends $\varphi(O_j)$.
Let $(T_i, T_j; \varphi)$ be an arc in $Z$. An internal node $\mu_i$ of $T_i$ is \emph{fixed by} an internal node $\mu_j$ of $T_j$ (and $\mu_j$ \emph{fixes} $\mu_i$ in $T_i$) if there exist leaves $x, y, z \in L(T_j)$ and $\varphi(x), \varphi(y), \varphi(z) \in L(T_i)$ such that (i) removing $\mu_j$ from $T_j$ makes $x$, $y$, and $z$ pairwise disconnected in $T_j$, and (ii) removing $\mu_i$ from $T_i$ makes $\varphi(x)$, $\varphi(y)$, and $\varphi(z)$ pairwise disconnected in $T_i$.\\
%
\indent An instance $\mathcal{I} = (N,Z)$ of \simpqord is \emph{normalized} if, for each arc $(T_i, T_j; \varphi) \in Z $ and for each internal node $\mu_j \in T_j$, tree $T_i$ contains exactly one node $\mu_i$ that is fixed by $\mu_j$.
Every instance of \simpqord can be normalized by means of an operation called the \emph{normalization}~\cite{addfpr-blp-16,br-spoace-16}, which is defined as follows. Consider each arc $(T_i, T_j;\varphi) \in Z$ and replace $T_j$ with $T_i|_{\varphi(L(T_j))} \cap T_j$ in $\mathcal{I}$, that is, replace tree $T_j$ with its intersection with the projection of its parent $T_i$ to the set of leaves of $T_i$ obtained by applying mapping $\varphi$ to the leaves $L(T_j)$ of $T_j$.
Consider a normalized instance $\mathcal{I} = (N,Z)$. Let $\mu$ be a P-node of a PQ-tree $T$ with parents $T_1,\dots, T_p$ and let $\mu_i \in T_i$ be the unique node in $T_i$, with $1 \le i \le p$, fixed by $\mu$. The \emph{fixedness} of $\mu$ is defined as $\fixed(\mu) = \omega + \sum_{i=1}^{p}(\fixed(\mu_i)- 1)$, where $\omega$ is the number of children of $T$
containing a node that fixes $\mu$.
A P-node $\mu$ is \emph{$k$-fixed} if $\fixed(\mu)\le k$. Also, instance $\mathcal{I}$ is \emph{$k$-fixed} if all the P-nodes of any PQ-tree $T \in N$ are $k$-fixed.

\noindent \textbf{FPQ-trees:} An \emph{FPQ-tree} is a PQ-tree where, for some of the Q-nodes, the reversal of the permutation described by their children is not allowed. To distinguish these Q-nodes from the regular Q-nodes, we call them \emph{F-nodes}~\cite{lrt-gpthec-19}.
The study of Bläsius and Rutter on \simpqord also considers the case in which the permutations described by some of the Q-nodes are totally fixed, hence the results given in~\cite{br-spoace-16} for \simpqord also hold when the nodes of the input DAG are FPQ-trees. In the rest of the paper we talk about \simfpqord to emphasize the presence of F-nodes, since they play an important role in our applications of hybrid planarity testing.

\noindent \textbf{Embedding DAG:}
Let $G$ be a biconnected planar graph and let $\mathcal{T}$ be an SPQR-decomposition tree of $G$.
The \emph{embedding DAG of $G$}, denoted as $\mathcal{D}$, describes for each vertex $v\in V$, the cyclic orders in which its incident edges appear in any planar embedding of $G$.
These cyclic orders can be described by looking at the SPQR-decomposition tree of $G$. We can ``translate'' an SPQR-decomposition tree $\mathcal{T}$ of $G$ into a set of PQ-trees (the embedding trees), which represent the same combinatorial embeddings as the ones defined by $\mathcal{T}$~\cite[Section~$2.5$]{br-spoace-16}. Note that the cyclic orders around a vertex depend in general on the cyclic orders of the edges around other vertices.

Bl\"asius and Rutter describe how to express such dependencies and all the planar embeddings of a graph into a DAG of PQ-trees by describing and exploiting the relation between PQ-trees and SPQR-trees~\cite[Section~$4.1$]{br-spoace-16}. The obtained DAG of PQ-trees is the embedding DAG $\mathcal{D}$, whose size is linear in the size of the SPQR-tree $\mathcal{T}$, and thus linear in the size of $G$ itself. $\mathcal{D}$ has an embedding tree $T(v)$ for each vertex $v$ of $G$, and other PQ-trees are connected to $T(v)$ in order to encode the dependencies with the cyclic orders of other vertices.
Figure~\ref{fi:example2-b} shows an SPQR-decomposition tree of the graph $G$ in Figure~\ref{fi:example2-a}, while
Figure~\ref{fi:example2-c} shows the embedding DAG $\mathcal{D}$ of the graph $G$ in Figure~\ref{fi:example2-a}, which encodes all the embedding constraints for the graph $G$.
Note that $\mathcal{D}$ has only P- and Q-nodes.


\begin{figure}[tb]
	\centering
	\subfigure[\label{fi:example2-a}{}]
	{\includegraphics[width=.2\textwidth,page=1]{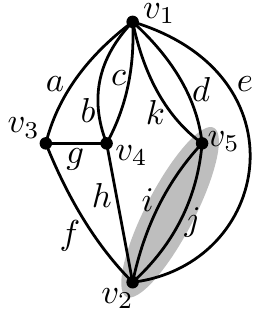}}
	\hfill
	\subfigure[\label{fi:example2-b}{}]
	{\includegraphics[width=.79\textwidth,page=2]{ex}}
	\hfill
	\subfigure[\label{fi:example2-c}{}]
	{\includegraphics[width=.9\textwidth,page=3]{ex}}
	\caption{(a) A biconnected planar graph $G$.
		(b) An SPQR-decomposition tree of $G$.
		(c) The embedding DAG $\mathcal{D}$ of $G$. P-nodes are depicted as circles, while Q-nodes are boxes.}
	\label{fi:running-example2}
\end{figure}

For example, $T(v_1)$ in Figure~\ref{fi:example2-c} is a PQ-tree that describes all the possible cyclic orders that the edges incident to $v_1$ can have in a planar embedding of the graph $G$ in Figure~\ref{fi:example2-a}: Edges $b$ and $c$ can be arbitrarily permuted, as well as edges $d$ and $k$, while $a$ cannot be found between $b$ and $c$.
The node $\delta^{\sf I}$ in $T(v_2)$ and the node $\delta^{\sf II}$ in $T(v_5)$ are two \emph{consistent} P-nodes, which means that in any planar embedding of $G$, if edge $i$ is encountered after edge $j$ in counter-clockwise order around $v_2$, then edge $i$ must be encountered before edge $j$ in counter-clockwise order around $v_5$. This constraint depends on the fact that $v_2$ and $v_5$ are the poles of a same triconnected component of $G$, highlighted in gray in Figure~\ref{fi:example2-a}.
This constraint is described by the PQ-tree $P(\delta)$ and by the two edges that are directed from $T(v_2)$ to $P(\delta)$ and from $T(v_5)$ to $P(\delta)$; one of these edges is labeled as \emph{reversing} because the orders of the edges around the two vertices must be opposite to one another. The injective mapping between source PQ-trees and sink PQ-trees of $\mathcal{D}$ is not shown in Figure~\ref{fi:example2-c}, but the starting points of the arcs suggest which mappings are suitable. For example, a suitable mapping is between the three leaves $b$, $c$, and $g$ of $T(v_4)$ and the three leaves of $P(\alpha)$; while a suitable mapping between $T(v_4)$ and $Q(\beta)$ maps $g$, $h$ and $b$ to the leaves of $Q(\beta)$.

\section{Fixedness and $1$-Fixed Constrained Planarity}\label{se:fixedness}

Bläsius and Rutter in~\cite{br-spoace-16} show that normalized instances of \simfpqord can be solved in quadratic time if they are $2$-fixed. In their applications, all instances are already normalized (or have a very simple structure) so that it is easy to verify whether an instance is $2$-fixed. The difficulty of applying their result to other contexts is that if the instances are not normalized, it is quite technical to understand the structure of the normalized instance and to check whether it is $2$-fixed.
In this section we present a new definition of fixedness that does no longer require the normalization as a preliminary step to check whether an instance of \simfpqord is $2$-fixed. This definition, given in Section~\ref{sse:def} together with the notion of joinable instances, significantly simplifies the application of \simfpqord. In Section~\ref{sse:constr-pl} we discuss the impact of this definition to efficiently solve a constrained planarity testing problem, called \onefixed.

\subsection{A New Definition of Fixedness}\label{sse:def}

\begin{definition}\label{de:fixedness}
  Let $\mathcal{I}=(N,Z)$ be an instance of \simfpqord and let $\mu$
  be a P-node of an FPQ-tree that belongs to a node $v$ of
  $\mathcal{I}$.
  The {\em fixedness} of $\mu$ is denoted as $\fixed(\mu)$.
  Let $\omega$ be the number of children of $v$ fixing $\mu$.  If $v$
  is a source, we define $\fixed(\mu) = \omega$. If $v$ is not a
  source, let~$p$ be the number of parent nodes $T_1,\dots,T_{p}$
  of~$v$ in $\mathcal{I}$. For~$i=1,\dots,p$, let~$F_i$ be the set
  of P-nodes of $T_i$ that is fixed by~$\mu$. If $|F_i| = 0$ for some
  $i=1,\dots,k$, then $\fixed(\mu) = 0$, otherwise~$\fixed(\mu) = \omega+\sum_{i=1}^{p} \max_{\nu \in
    F_i}(\fixed(\nu)-1)$. The P-node $\mu$ is \emph{$k$-fixed} if
  $\fixed(\mu)\le k$.  Instance $\mathcal{I}$ is \emph{$k$-fixed} if all  P-nodes of FPQ-trees $T \in N$ are~$k$-fixed.
\end{definition}

We remark that Definition~\ref{de:fixedness} coincides with the notion of fixedness given in~\cite{br-spoace-16} (see Section~\ref{se:preliminaries}) if we restrict ourselves to normalized instances. Namely, in a normalized instance, $|F_i| = 1$ for $i=1,\dots,p$, and the maximum vanishes.

\begin{restatable}{lemma}{fixednessnotincreasing}\label{le:fixedness-notincreasing}
	Let $\mathcal{I}$ be an instance of \simfpqord and let~$\mathcal{I}'$ be the
	normalization of $\mathcal{I}$.  Then $\fixed(\mathcal{I}') \le \fixed(\mathcal{I})$.
\end{restatable}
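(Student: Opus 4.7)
The plan is to compare, via the \emph{stems from} relation, each P-node of $\mathcal{I}'$ with a uniquely determined P-node of $\mathcal{I}$. I will realize the normalization as a sweep in topological order: for each tree $v$ of the DAG, all incoming arcs from already-normalized parents $T_1,\dots,T_p$ are processed together by replacing $v$ with $\bigcap_{i=1}^{p}T_i|_{\varphi_i(L(v))} \cap v$. Each such intersection is a sequence of reductions (Section~\ref{se:preliminaries}), so the stems-from relation is well defined at every step and composes to a map $\sigma$ that sends each P-node $\mu'$ of $\mathcal{I}'$ to the unique P-node $\mu = \sigma(\mu')$ of $\mathcal{I}$ from which $\mu'$ originates. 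The main claim to establish is $\fixed_{\mathcal{I}'}(\mu') \le \fixed_{\mathcal{I}}(\sigma(\mu'))$ for every P-node $\mu'$ of $\mathcal{I}'$, from which the lemma immediately follows by taking the maximum over all P-nodes.

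The key technical tool I would prove first is a \emph{monotonicity} lemma for the fixing relation under refinement: if $T'$ is obtained from $T$ by intersection and $\eta'$ is an internal node of $T'$ whose removal disconnects three leaves $x,y,z$, then removing the corresponding ancestor $\eta$ in $T$ (the P-node from which $\eta'$ stems when $\eta'$ is a P-node, or the least common ancestor of $x,y,z$ in $T$ when $\eta'$ is a Q-node) also disconnects $x,y,z$. The intuition is that refinement only inserts new internal structure within the subtrees of $\eta$, so the three subtrees of $\eta'$ containing the three leaves must lie in three distinct subtrees of $\eta$. Applying this on both sides of an arc yields two consequences: (i) every P-node $\nu'_i \in F_i(\mu')$ in $\mathcal{I}'$ stems from a P-node $\nu_i \in F_i(\sigma(\mu'))$ in $\mathcal{I}$; and (ii) every child $w'$ of $v'$ that contains a node fixing $\mu'$ is the refinement of a child $w$ of $v$ that contains a node fixing $\sigma(\mu')$, so $\omega_{\mathcal{I}'}(\mu') \le \omega_{\mathcal{I}}(\sigma(\mu'))$.

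I then prove the main claim by induction on the topological position of the tree containing $\mu'$ in the DAG. The base case (sources) is immediate: the source tree is untouched by normalization, the parent sum is empty, and (ii) supplies the required bound on $\omega$. In the inductive step, (i) together with the induction hypothesis applied to each $\nu'_i$ yields $\fixed_{\mathcal{I}'}(\nu'_i) - 1 \le \fixed_{\mathcal{I}}(\nu_i) - 1 \le \max_{\nu \in F_i(\sigma(\mu'))}(\fixed_{\mathcal{I}}(\nu)-1)$; summing over the parents and adding the bound from (ii) reconstructs exactly the formula for $\fixed_{\mathcal{I}}(\sigma(\mu'))$. The edge case $|F_i(\sigma(\mu'))|=0$, in which Definition~\ref{de:fixedness} sets $\fixed_{\mathcal{I}}(\sigma(\mu'))=0$, is handled automatically: by (i), $|F_i(\mu')|=0$ as well, contradicting the normalization property that every internal node of $\mathcal{I}'$ fixes exactly one node in each parent; hence no such $\mu'$ exists in $\mathcal{I}'$ and the claim holds vacuously. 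The main technical hurdle will be tightening the monotonicity lemma, in particular handling the Q-node case where stems-from does not apply directly, and verifying that $\sigma$ is insensitive to the order in which arcs are processed during the topological sweep.
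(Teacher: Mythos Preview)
Your approach is essentially the paper's, repackaged. Both proofs rest on the same three ingredients: the \emph{stems-from} map from P-nodes of the refined tree to P-nodes of the original; the observation that if removing $\mu'$ disconnects three leaves then removing $\sigma(\mu')$ disconnects the same three leaves; and the resulting bounds $\omega(\mu') \le \omega(\sigma(\mu'))$ and $F_i(\mu') \hookrightarrow F_i(\sigma(\mu'))$. The paper simply organizes the argument differently: it normalizes one arc at a time, obtaining a chain $\mathcal{I}=\mathcal{I}_0,\dots,\mathcal{I}_m=\mathcal{I}'$, and shows $\fixed(\mathcal{I}_{i+1})\le\fixed(\mathcal{I}_i)$ at each step. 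That localizes the stems-from map to a single tree replacement and avoids your worries about composing $\sigma$ across the whole sweep and about order-independence; conversely, your direct $\mathcal{I}'$-vs-$\mathcal{I}$ induction is arguably cleaner once $\sigma$ is set up.

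Two small corrections. First, the Q-node case in your monotonicity lemma is a red herring: fixedness is defined only for P-nodes, $F_i$ contains only P-nodes, and a P-node of an intersection always stems from a P-node, so you never need to trace a Q-node back. Second, your handling of the edge case $|F_i(\sigma(\mu'))|=0$ is slightly off. Normalization guarantees that every internal node of $\mathcal{I}'$ fixes exactly one \emph{node} in each parent, but that node may be a Q-node, so $|F_i(\mu')|=0$ is not a contradiction. The fix is immediate: your consequence~(i) still gives $|F_i(\mu')|=0$, whence $\fixed_{\mathcal{I}'}(\mu')=0\le 0=\fixed_{\mathcal{I}}(\sigma(\mu'))$, and the inequality holds non-vacuously.
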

\begin{proof}
	We recall that normalizing an arc $(T,T';\varphi)$ of an instance of
	\simfpqord means replacing $T'$ by
	$T''=T|_{\varphi(L(T'))} \cap T'$, i.e., we first project $T$ to the
	leaves of $T'$, which yields a tree $T^\star$ whose leaves
	bijectively correspond to those of $T'$ and then intersect $T^\star$
	and $T'$ to obtain~$T''$. In this way, any restrictions that $T$
	imposes on the ordering of the leaves of $T'$ are transferred to
	$T''$, which thus represents exactly those orders of $T'$ that can
	be extended to $T$. The normalization process executes this in
	top-down fashion for each arc of the instance, thus giving a
	sequence of instances
	$\mathcal{I}=\mathcal{I}_0,
	\mathcal{I}_1,\mathcal{I}_2,\dots,\mathcal{I}_m = \mathcal{I}'$,
	where $m$ is the number of arcs of
	$\mathcal{I}$~\cite{br-spoace-16}.  We prove that
	$\fixed(\mathcal{I}_{i+1}) \le \fixed(\mathcal{I}_{i})$ for
	$i=0,\dots,m-1$, which implies the claim of the lemma.
	
	Assume that $\mathcal{I}_{i+1}$ is obtained from $\mathcal{I}_i$ by
	normalizing an arc $(T,T';\varphi)$ to $(T,T'';\varphi)$.
	Let~$\mu''$ be a P-node of $T''$.  Since $T''$ is obtained by
	applying to $T^\star$ a sequence of reductions with subsets of
	leaves whose orders are given by $T'$, we have that $\mu''$ stems
	from a single P-node~$\mu'$ of $T'$.  We have the following claim.
	
	\begin{claim}
		$\fixed(\mu'') \le \fixed(\mu')$
	\end{claim}
	
	We first show that by using this claim, the inductive step of the
	lemma follows and then prove the claim.  The fixedness of a
	node~$\mu$ depends on the number of children fixing it, as well as,
	for each parent in which it fixes a P-node the maximum fixedness of
	those P-nodes.  First observe that, whether a P-node~$\mu$ of some
	arbitrary FPQ-tree $T_1$ is fixed by one of its children $T_2$ or not
	depends solely on the set of leaves $L(T_2)$ and not on any other
	structural considerations.  Since the normalization of an arc does
	not change the leaf set of any tree, for each P-node not in $T''$
	the number of children fixing it does not change.  For a
	P-node~$\mu''$ of $T''$, any child that fixes~$\mu''$ also fixes the
	node~$\mu'$ it stems from, and therefore the number $w''$ of
	children fixing~$\mu''$ is upper-bounded by the number $w'$ of
	children fixing~$\mu'$.  Therefore, if a P-node~$\nu$ not in $T''$
	increases its fixedness, this is due to a parent FPQ-tree containing
	a P-node~$\nu'$ that is fixed by~$\nu$ for which $\fixed(\nu')$
	increased by the normalization.  Traversing the DAG upwards, in this
	way, we eventually find a P-node~$\mu''$ of~$T''$ that is
	responsible for the increase in fixedness.  But then, before the
	normalization, the fixedness of the node~$\mu'$ from which $\mu''$
	stems was used to compute the fixedness of the corresponding child.
	The claim implies that this fixedness is greater than or equal to
	the new value that is used to determine the fixedness.  This
	contradicts the assumption that the fixedness of~$\nu$ increased.
	
	It remains to prove the claim.  As argued above, the
	numbers~$\omega'$ and~$\omega''$ of children of $T'$ and~$T''$ that
	fix~$\mu'$ and~$\mu''$, respectively,
	satisfy~$\omega'' \le \omega'$.  Similarly, let~$p'$ and~$p''$ be
	the number of parent nodes for which a P-node is fixed by~$T'$
	and~$T''$, respectively.  In particular, $T'$ and~$T''$ fix P-nodes
	from the same parent trees~$T_1,\dots,T_{p'}$.  Let~$F_i'$
	and~$F_i''$ denote the sets of P-nodes of $T_i$ that are fixed
	by~$\mu'$ and~$\mu''$, respectively.  Again, since~$\mu''$ stems
	from~$\mu'$, it follows that~$F_i'' \subseteq F_i'$
	for~$i=1,\dots,p'$.  Moreover, the fixedness of the nodes in the
	sets $F_i$ did not increase, since they are not descendants of
	$T''$.  Therefore the claim follows.
\end{proof}

By Lemma~\ref{le:fixedness-notincreasing}, it suffices to check the $2$-fixedness of a non-normalized instance of \simfpqord to conclude that it can be solved in quadratic time by exploiting~\cite[Theorems 3.11, 3.16]{br-spoace-16}. We now further simplify the applicability of the result.

Let $\mathcal{I} = (N,A)$ be an instance of \simfpqord.  We denote
by~$\source(\mathcal{I})$ the set of sources of $\mathcal{I}$.  A solution of an
instance~$\mathcal{I}=(N,A)$ of \simfpqord determines a tuple of cyclic orders
$(O_v)_{v \in N}$.  In many cases, we are only interested in the
cyclic orders at the sources, and we therefore define
$\sol(\mathcal{I}) = \{ (O_v)_{v \in \source(\mathcal{I})} \mid \mathcal{I}$ has a solution
$(O_v')_{v \in N}$ with $O_v = O_v'$ for $v \in \source(\mathcal{I})\}$.  We say
that an instance $\mathcal{I}$ has \emph{P-degree} $k$ if every node whose
FPQ-tree contains a P-node has at most $k$ parents.
Let~$\mathcal{I}$ and~$\mathcal{I}'$ be two instances of \simfpqord such that there exists a
bijective mapping~$M$ between the sources of~$\mathcal{I}$ and the sources
of~$\mathcal{I}'$ with $L(M(T)) = L(T)$ for each source $T$ of $\mathcal{I}$.  We call $\mathcal{I}$
and~$\mathcal{I}'$ \emph{joinable}. 
The \emph{join DAG} of~$\mathcal{I}$ and~$\mathcal{I}'$ is the instance $\mathcal{I} \Join \mathcal{I}'$ obtained by replacing, for each source node $T$ of $\mathcal{I}$ (and each corresponding source node $M(T)$ of $\mathcal{I}'$), the nodes $T$ (and~$M(T)$) by $T \cap M(T)$ and identifying the respective nodes of $\mathcal{I}$ and~$\mathcal{I}'$.  By construction, it is $\sol(\mathcal{I} \Join \mathcal{I}') = \sol(\mathcal{I}) \cap \sol(\mathcal{I}')$.

\begin{restatable}{lemma}{lemmaIntersectiontwofixed}\label{le:intersection-2fixed}
  Let $\mathcal{I}$ and $\mathcal{I}'$ be joinable instances of \simfpqord with P-degree at
  most $2$ and such that their associated DAGs each have height~$1$.  If
  both~$\mathcal{I}$ and~$\mathcal{I}'$ are $1$-fixed, then $\mathcal{J} = \mathcal{I} \Join  \mathcal{I}'$ is $2$-fixed.
\end{restatable}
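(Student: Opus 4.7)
The plan is to exploit the height-$1$ bipartite structure of the two input instances and analyse each P-node of $\mathcal{J}$ according to whether it lives in a source of $\mathcal{J}$ (i.e., in a joined intersection $S = T\cap M(T)$) or in a sink (which is unchanged from $\mathcal{I}$ or $\mathcal{I}'$). Both cases will be reduced to the $1$-fixedness of the corresponding node or stem in one of the input instances.

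The key structural ingredient I first want to establish is a monotonicity property of stemming: if $S$ is obtained from $T$ by a sequence of reductions and a P-node $\mu$ of $S$ stems from a P-node $\mu_T$ of $T$, then each component of $S\setminus\mu$ is a union of components of $T\setminus\mu_T$. As an immediate consequence, any three leaves that are pairwise disconnected in $S\setminus\mu$ are also pairwise disconnected in $T\setminus\mu_T$. Applied to the intersection $S = T\cap M(T)$, this yields the crucial implication: any internal node of a child $C$ that fixes $\mu$ in $S$ under an arc of $\mathcal{J}$ also fixes the stem $\mu_T$ in $T$ under the corresponding arc of $\mathcal{I}$, and symmetrically for $M(T)$ in $\mathcal{I}'$.

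With this in hand, the two cases are straightforward. In Case~A, for a P-node $\mu$ of a source $S = T\cap M(T)$ of $\mathcal{J}$, the children of $S$ in $\mathcal{J}$ are the disjoint union of the children of $T$ in $\mathcal{I}$ and the children of $M(T)$ in $\mathcal{I}'$. By the implication above, each such child that fixes $\mu$ also fixes the respective stem, so the $1$-fixedness of $\mathcal{I}$ and $\mathcal{I}'$ bounds the number of such children on each side by $1$; since $S$ is a source of $\mathcal{J}$, $\fixed(\mu)\le 2$. In Case~B, a P-node $\mu$ of a sink of, say, $\mathcal{I}$ has no children in $\mathcal{J}$ and, by the P-degree assumption, at most two parents $S_1, S_2$, each of the form $T_i\cap M(T_i)$. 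For every P-node $\nu$ of $S_i$ that $\mu$ fixes, Case~A yields $\fixed(\nu)\le 2$, hence the corresponding $\max(\fixed(\nu)-1)\le 1$, and summing over at most two parents gives $\fixed(\mu)\le 2$; if the set of P-nodes of some $S_i$ fixed by $\mu$ is empty, then $\fixed(\mu)=0$ outright.

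The main obstacle is the structural monotonicity claim about stemming. The reductions that realise the intersection $T\cap M(T)$ can split Q-nodes, regroup children, and introduce new internal nodes, so one must verify, via an induction on the reduction steps and a case analysis of the Booth--Lueker transformation rules, that the partition around the stem P-node only coarsens and never introduces a separation between leaves that were already together in $T$. The remainder of the argument is a purely combinatorial accounting inside $\mathcal{J}$.
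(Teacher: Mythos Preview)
Your proposal is correct and follows essentially the same two-case split (sources vs.\ sinks of $\mathcal{J}$) as the paper's proof. The only difference is one of explicitness: the paper simply asserts that a P-node $\mu$ of $T\cap M(T)$ stems from a unique P-node~$\nu$ of~$T$ (and~$\nu'$ of~$M(T)$) and that ``any child of~$\mathcal{I}$ that fixes~$\mu$ must either have fixed~$\nu$ or~$\nu'$'', whereas you unpack this into the monotonicity-of-stemming claim and flag it as the technical crux; your inductive justification via the reduction templates is the right way to substantiate what the paper leaves implicit.
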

\begin{proof}
	By construction, the height of~$\mathcal{J}$ is~$1$, i.e., each node is either a
	source or a sink.  We show that the fixedness of each P-node of an
	FPQ-tree of $\mathcal{J}$ is at most~$2$.  We treat the sources and sinks
	separately.  Let~$\mu$ be a P-node of a source $T$ of $\mathcal{J}$.
	Since~$T$ is the intersection of a source $S$ of $\mathcal{I}$ with a source
	$S'$ of $\mathcal{I}'$, it follows that~$\mu$ stems from a single P-node~$\nu$
	of $S$ and from a single P-node~$\nu'$ of $S'$.  Clearly, any child of $\mathcal{I}$
	that fixes~$\mu$ must either have fixed~$\nu$ or~$\nu'$.  Hence,
	$\fixed(\mu) \le \fixed(\nu) + \fixed(\nu') \le 2$ since~$\mathcal{I}$
	and~$\mathcal{I}'$ are $1$-fixed.
	
	Let now $\mu$ be a P-node of a sink $T$ of $\mathcal{J}$ that has at least one
	parent (otherwise $T$ would be a source).  Due to the above, P-nodes
	of all sources are at most $2$-fixed.  Hence~$\fixed(\nu) - 1 \le 1$
	for each P-node~$\nu$ of a parent that is fixed by~$\mu$.  Since $\mathcal{I}$
	and~$\mathcal{I}'$ have P-degree at most~$1$, $T$ has at most two parents.  It hence follows that $\fixed(\mu) \le 2$.
\end{proof}

\subsection{1-Fixed Constrained Planarity}\label{sse:constr-pl}

Let $G=(V,E)$ be a biconnected planar graph, let~$v \in V$ be a
vertex, and let $E(v)$ be the edges of $G$ incident to $v$. A \emph{$1$-fixed constraint} $\mathcal{C}(v)$ for $v$ is a $1$-fixed instance of \simfpqord such that it has P-degree at most~$2$ and it has a single source whose FPQ-tree has the edges in $E(v)$ as its leaves.
The following property is implied by~\cite[Section~4.1]{br-spoace-16}.

\begin{property}\label{pr:embeddingDAGlength1}
	For each vertex $v$ of $G$, $\mathcal{D}(v)$ is a $1$-fixed constraint.
\end{property}

Let $\emb$ be an embedding of $G$ and let $\emb(v)$ be the cyclic order that $\emb$ induces on the edges around $v$.
We say that embedding~$\emb$ \emph{satisfies} constraint $\mathcal{C}(v)$ if there exists a solution for~$\mathcal{C}(v)$ such that the order of the source is~$\emb(v)$.

Given a graph $G$ and a $1$-fixed constraint for each vertex of $G$, the \onefixed testing problem asks whether $G$ is \emph{$1$-fixed constrained planar}, i.e., it admits a planar embedding that satisfies all the constraints.

\begin{restatable}{theorem}{theoremConstrainedEmbeddingQuadratic}\label{th:constrained-embedding-quadratic}
  Let $G = (V,E)$ be a biconnected planar graph with $n$ vertices, and
  for each~$v \in V$ let $\mathcal{C}(v)$ be a $1$-fixed
  constraint. \onefixed can be tested in $O(n^2)$ time.
\end{restatable}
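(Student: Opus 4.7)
The plan is to recast \onefixed as a single instance of \simfpqord whose feasibility corresponds to the existence of an embedding satisfying every constraint, and then invoke the quadratic algorithm of Bl\"asius and Rutter (\cite[Theorems 3.11, 3.16]{br-spoace-16}) for normalized $2$-fixed instances.

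First I would build the embedding DAG $\mathcal{D}$ of $G$ in linear time from its SPQR-decomposition, following \cite[Section 4.1]{br-spoace-16}. By Property~\ref{pr:embeddingDAGlength1}, $\mathcal{D}$ is a height-$1$ DAG of FPQ-trees with P-degree at most $2$; its sources $T(v)$ have leaf set $E(v)$, and the solutions of $\mathcal{D}$ restricted to the sources are in bijection with planar embeddings $\emb$ of $G$. Since each constraint $\mathcal{C}(v)$ also has a unique source whose FPQ-tree has $E(v)$ as leaves, the instances $\mathcal{D}$ and $\bigsqcup_{v \in V}\mathcal{C}(v)$ are joinable via the natural source bijection. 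I would form the join $\mathcal{J} = \mathcal{D} \Join \bigsqcup_{v \in V}\mathcal{C}(v)$; by construction $\sol(\mathcal{J}) = \sol(\mathcal{D}) \cap \bigcap_{v \in V} \sol(\mathcal{C}(v))$, so $G$ is $1$-fixed constrained planar if and only if $\mathcal{J}$ is feasible.

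Next I would establish that $\mathcal{J}$ is $2$-fixed by a global application of Lemma~\ref{le:intersection-2fixed}: each source of $\mathcal{J}$ arises as the intersection of two $1$-fixed P-degree-$2$ sources (one from $\mathcal{D}$, one from $\mathcal{C}(v)$), so every P-node it contains has fixedness at most $2$; the sink part of the proof of Lemma~\ref{le:intersection-2fixed} propagates the bound $\fixed \le 2$ to the rest of $\mathcal{J}$, using that each non-source P-node has at most two parents, inherited from the component it originally came from. After normalizing $\mathcal{J}$, Lemma~\ref{le:fixedness-notincreasing} guarantees that the normalized instance is still $2$-fixed, and the Bl\"asius--Rutter solver then decides its feasibility in time quadratic in its size. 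Since the total size of $\mathcal{D}$ together with all $\mathcal{C}(v)$ is $O(n)$, the overall running time is $O(n^2)$.

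The main obstacle is the careful bookkeeping required to lift Lemma~\ref{le:intersection-2fixed} from a pair of instances to the global join: I must confirm that after identifying all source pairs and normalizing, the resulting DAG still has height~$1$ and that no P-node of the normalized instance acquires more than the two parents used in the fixedness recurrence. A secondary technical point is to verify that normalization does not blow up the size of $\mathcal{J}$ by more than a constant factor, which is needed to keep the total time quadratic; this follows from the standard properties of normalization recorded in~\cite{br-spoace-16} together with the height-$1$ structure of both $\mathcal{D}$ and each $\mathcal{C}(v)$.
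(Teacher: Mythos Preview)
Your proposal is correct and follows essentially the same route as the paper: form $\mathcal{J} = \mathcal{D} \Join \bigsqcup_{v}\mathcal{C}(v)$, show it is $2$-fixed via Lemma~\ref{le:intersection-2fixed}, transfer this to the normalization via Lemma~\ref{le:fixedness-notincreasing}, and solve in $O(n^2)$.

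The obstacles you flag at the end are, however, non-issues, and the paper does not need to deal with them. There is no ``lifting'' of Lemma~\ref{le:intersection-2fixed} required: the disjoint union $\mathcal{C}=\bigsqcup_{v}\mathcal{C}(v)$ is itself a single instance of \simfpqord that is $1$-fixed, has height~$1$, and P-degree at most~$2$ (each property is inherited componentwise), so the lemma applies directly to the pair $(\mathcal{D},\mathcal{C})$. Likewise, you need not verify anything about the height or parent counts of the \emph{normalized} instance: the whole point of Definition~\ref{de:fixedness} and Lemma~\ref{le:fixedness-notincreasing} is that the $2$-fixedness bound established on the unnormalized $\mathcal{J}$ automatically descends to its normalization, without any further structural inspection. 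Your final size concern is handled exactly as you indicate, and the paper cites \cite[Lemma~3.12]{br-spoace-16} for the linear-time normalization.
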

\begin{proof}
  Let $\mathcal{D}$ be the embedding DAG of~$G$, where $\sol(\mathcal{D})$
  corresponds bijectively to the rotation systems of the planar embeddings
  of $G$~\cite{br-spoace-16}.
  The embedding DAG $\mathcal{D}(v)$ of a vertex $v \in V$ is such that  $\sol(\mathcal{D}(v))$ corresponds bijectively to the cyclic orders that the planar embeddings of $G$ induce around $v$.
  Let $\mathcal{C}$ denote the instance of
  \simfpqord that is the disjoint union
  $\bigcup_{v \in V} \mathcal{C}(v)$, and observe further that~$\sol(\mathcal{C})$ are
  precisely the rotations at vertices that satisfy all the constraints
  $\mathcal{C}(v)$.  Observe further that $\mathcal{D}$ and~$\mathcal{C}$ are joinable, and
  $\sol(\mathcal{D} \Join \mathcal{C})$ are exactly the rotation systems of planar
  embeddings of $G$ that satisfy all the constraints $\mathcal{C}(v)$, $v \in V$.
By Property~\ref{pr:embeddingDAGlength1}, both $\mathcal{D}$ and~$\mathcal{C}$ are $1$-fixed, have height~$1$ and P-degree at most~$2$.
  Therefore, by Lemma~\ref{le:intersection-2fixed} $\mathcal{J} = \mathcal{D} \Join \mathcal{C}$
  is $2$-fixed and by Lemma~\ref{le:fixedness-notincreasing} also the
  normalization of $\mathcal{J}$ is $2$-fixed.  It follows that the normalization
  of $\mathcal{J}$ can be solved in $O(n^2)$ time~\cite[Theorems 3.11,
  3.16]{br-spoace-16}.  The overall result follows from the fact that
  $\mathcal{D}$ and~$\mathcal{C}$ have size linear in $n$ and their normalization can be
  computed in linear time~\cite[Lemma 3.12]{br-spoace-16}.
\end{proof}

\section{Hybrid Planarity Testing Problems}\label{se:hybrid}


In this section, we study the interplay between hybrid planarity testing problems and \onefixed. We consider a variant of the \nodetrix paradigm in Section~\ref{sse:rcint}, and we study \polylink representations in Section~\ref{sse:other-hybrid}, which include special cases of intersection-link representations~\cite{addfpr-ilrg-17}.

\subsection{The Row-Column Independent NodeTrix Planarity Problem}\label{sse:rcint}

We recall that in a \nodetrix representation each cluster is represented as an adjacency matrix, while the inter-cluster edges are simple curves connecting the corresponding matrices and not crossing any other matrix~\cite{ddfp-cnrcg-jgaa-17,dlpt-ntptsc-19,hfm-dhvsn-07}.
A \emph{\nodetrix graph} is a flat clustered graph with a \nodetrix representation.
For example, Fig.~\ref{fi:intro-b} is a \nodetrix representation of the graph in Fig.~\ref{fi:intro-a}; note that in this representation for every vertex there are four segments, one for each side of the matrix, to which inter-cluster edges can be connected. A \nodetrix representation is said to be \emph{with fixed sides} if the sides of the matrices to which the inter-cluster edges must be incident are given as part of the input.



The \nodetrixplanarity testing problem with fixed sides is NP-hard~\cite{ddfp-cnrcg-jgaa-17}, and it is fixed parameter tractable with respect to the maximum size of clusters and to the branchwidth of the graph obtained by collapsing each cluster into a single vertex, as shown in~\cite{dlpt-ntptsc-19,lrt-gpthec-19}. \nodetrixplanarity with fixed sides is known to be solvable in linear time when rows and columns are not allowed to be permuted~\cite{ddfp-cnrcg-jgaa-17}.
This naturally raises the question about whether a polynomial-time solution exists also for less constrained versions of \nodetrixplanarity.

We study the scenario in which the permutations of rows and columns can be chosen independently. Namely, we introduce a relaxed version of \nodetrixplanarity with fixed sides, called \rowcolumn (\rcintplanarity for short). \rcintplanarity asks whether a flat clustered graph admits a planar \nodetrix representation in the fixed sides scenario, but it allows to permute the rows and the columns independently of one another. A graph for which the \rcintplanarity test is positive is said to be \emph{\rci planar}.

\noindent \textbf{{The Equipped Frame Graph:}} We model \rcintplanarity as an instance of \onefixed defined on a (multi-)graph associated with $(G,S)$, that we call the \emph{equipped frame graph} of $G$, denoted as $G_F$.
Graph $G_F$ is obtained from $G$ by collapsing each cluster into a single vertex. More precisely, $G_F$ has $n_F=|S|$ vertices, each one corresponding to one of the matrices defined by $S$. There is an edge between two vertices $u$ and $v$ of $G_F$ if and only if there is an edge in $G$ between matrices $M_u$ and $M_v$ corresponding to $u$ and to $v$, respectively.
A \nodetrix graph is biconnected if its equipped frame graph is biconnected and, from now on, we consider biconnected \nodetrix graphs.

\begin{figure}[tb]
	\centering
	\subfigure[\label{fi:matrix-simpq-a}{}]
	{\includegraphics[width=.48\textwidth,page=2]{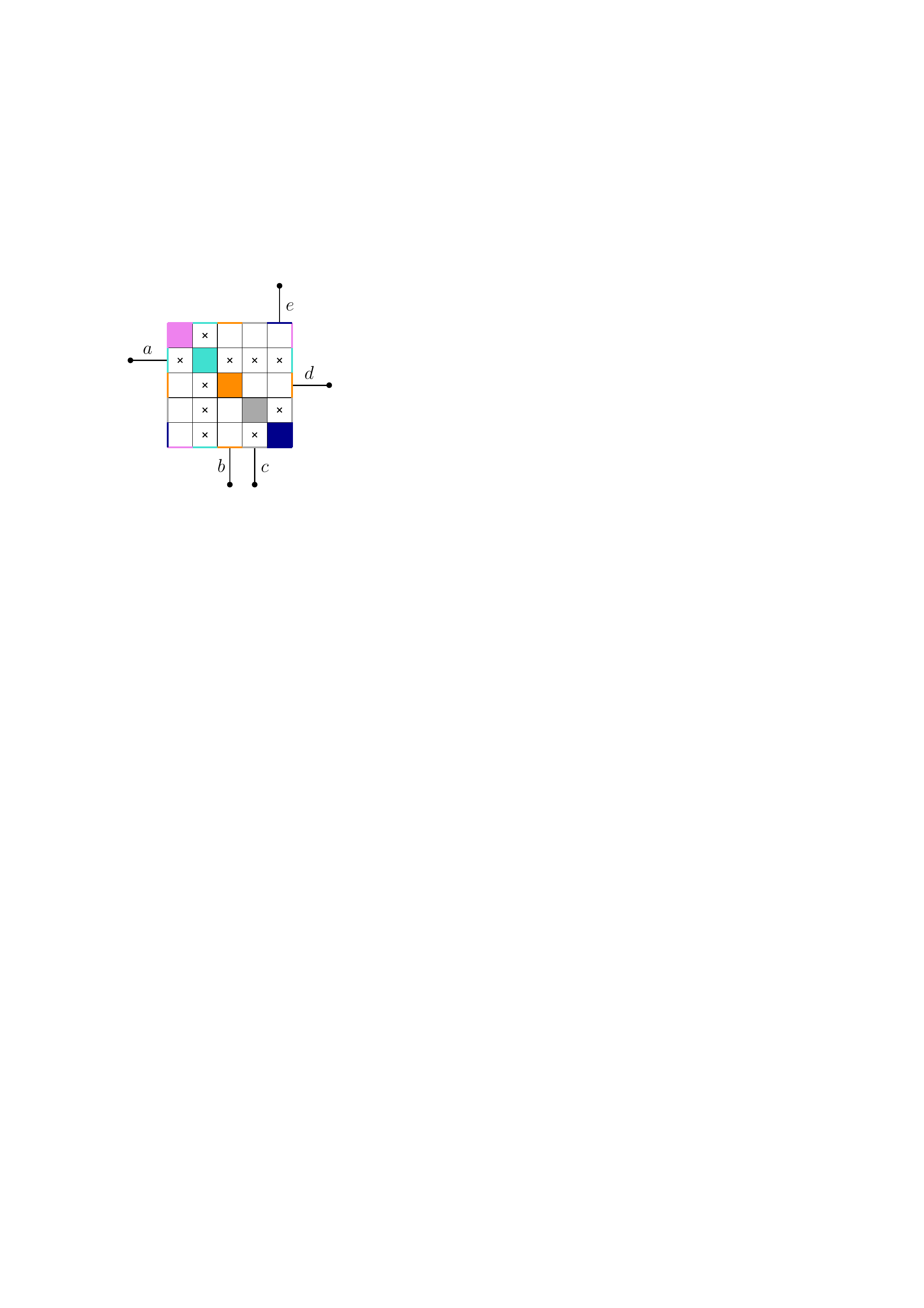}}
	\hfill
	\subfigure[\label{fi:matrix-simpq-b}{}]
	{\includegraphics[width=.38\textwidth,page=3]{matrix-simPQ}}
	\caption{(a) An \rci planar graph $(G,S)$ that is not \nodetrix planar with fixed sides. (b) The constraint DAG $\mathcal{H}(v_1)$ associated with vertex $v_1$ of the equipped frame graph of $G$, corresponding to matrix $M_1$. P-nodes are circles, F-nodes are shaded boxes.}
	\label{fi:matrix-simpq}
\end{figure}

Each vertex $v$ of $G_F$ is associated with a \emph{constraint DAG} $\mathcal{H}(v)$ whose nodes are FPQ-trees. More precisely, the source vertex of $\mathcal{H}(v)$ is an FPQ-tree $T_M$ consisting of an F-node with four incident P-nodes; each of such P-nodes describes possible permutations for the rows or for the columns of the matrix $M_v$. Two P-nodes encode the permutations of the rows (on the left and right hand-side of $M_v$), and the other two P-nodes encode the permutations of the columns (on the top and bottom hand-side of $M_v$).
The source of $\mathcal{H}(v)$ has two adjacent vertices; one of these adjacent vertices is associated with an FPQ-tree $T_R$, and the other one is associated with an FPQ-tree $T_C$. $T_R$ specifies permutations for the rows of $M_v$, and $T_C$ specifies permutations for the columns of $M_v$, that must be respected by the P-nodes of the FPQ-tree in the root of $\mathcal{H}(v)$. We say that $T_R$ and $T_C$ define the \emph{coherence} between the permutations of the rows and the permutations of the columns, respectively.
Fig.~\ref{fi:matrix-simpq-a} shows a \nodetrix graph $(G,S)$ whose clusters have size $5$ and Fig.~\ref{fi:matrix-simpq-b} shows the constraint DAG $\mathcal{H}(v_1)$ associated with vertex $v_1$ of the equipped frame graph of $G$.
Note that $G$ is \rci planar but it is not \nodetrix planar with fixed sides: If we require the rows and the columns of $M_1$ to have the same permutation, it is easy to check that either a crossing between $b$ and $c$ or one between $d$ and $k$ occurs.
Two arcs of Fig.~\ref{fi:matrix-simpq-b} are labeled \emph{reversing} because, for any given permutation of the rows (columns), the rows (columns) are encountered in opposite orders when walking around $M_1$.
Note that $\mathcal{H}(v)$ is an instance of \simfpqord.

\begin{property}\label{pr:constraintDAGlength1}
	For each vertex $v$ of $G_F$, $\mathcal{H}(v)$ is a $1$-fixed constraint.
\end{property}

Let $\mathcal{D}$ be the embedding DAG of $G_F$.
Each vertex $v$ of $G_F$ is associated with its constraint DAG $\mathcal{H}(v)$ and its embedding DAG $\mathcal{D}(v)$.

\begin{restatable}{lemma}{lemmaInterplay}\label{le:interplay2}
	A biconnected \nodetrix graph with fixed sides is \rci planar if and only if its equipped frame graph is $1$-fixed constrained planar.
\end{restatable}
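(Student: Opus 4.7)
The plan is to prove the bi-implication by establishing a correspondence between valid RCI planar representations of $(G,S)$ and pairs consisting of a planar embedding of $G_F$ together with a solution to each constraint DAG $\mathcal{H}(v)$. The backbone of both directions is the observation that collapsing each matrix $M_v$ to a single point turns inter-cluster edges into edges of $G_F$, and the rotation of these edges around $v$ in $G_F$ is determined by the side of $M_v$ they attach to and by the order of the row/column ports on that side. Conversely, expanding any vertex $v$ of $G_F$ back into a matrix with chosen row and column permutations yields an RCI-style local drawing of $M_v$ whose ports, read in cyclic order around $M_v$, give exactly the prescribed rotation at $v$.

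For the forward direction, I would start from an RCI planar representation $\Gamma$ of $(G,S)$. Contracting every $M_v$ to a point produces a planar embedding $\emb$ of $G_F$, since the contraction never introduces crossings between inter-cluster edges and there are no intra-cluster edges in $G_F$. I then need to verify that $\emb(v) \in \sol(\mathcal{H}(v))$ for every $v$. By construction, $\emb(v)$ lists the inter-cluster edges grouped consecutively by the four sides of $M_v$ in the cyclic order top–right–bottom–left, which is exactly what the F-node at the source of $\mathcal{H}(v)$ enforces. Within each side, the ports appear in the order of the rows (resp.\ columns) of $M_v$, and the orders on the left and right sides are reverses of each other (likewise for top and bottom). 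This is precisely the coherence that $T_R$ and $T_C$ impose via their reversing arcs, so the rotation $\emb(v)$ lifts to a solution of $\mathcal{H}(v)$.

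For the backward direction, I would start from a planar embedding $\emb$ of $G_F$ together with solutions of every $\mathcal{H}(v)$ that induce $\emb(v)$ at the source of each constraint. The solution of $\mathcal{H}(v)$ specifies a consistent row permutation (shared by the left/right sides up to reversal) and a consistent column permutation (shared by the top/bottom sides up to reversal), hence fixes a concrete matrix $M_v$ with labeled row and column ports. I then build $\Gamma$ by placing each $M_v$ in the face structure of $\emb$ and routing each inter-cluster edge from its endpoint port on one matrix to its endpoint port on the other, following the edge in $\emb$. Because $\emb$ is planar and the cyclic order of ports around each $M_v$ agrees with $\emb(v)$, the routing can be carried out in a narrow tube around each edge of $\emb$, yielding an RCI representation without inter-cluster edge crossings.

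The main obstacle is really the correctness claim that $\sol(\mathcal{H}(v))$, projected to the source, coincides with the set of cyclic orders of the edges of $G_F$ around $v$ induced by RCI-compatible drawings of $M_v$. The F-node/P-node structure handles the side partition immediately, but the \emph{coherence} between the left/right row orders (and between the top/bottom column orders) must be shown to correspond exactly to the reversing arcs into $T_R$ and $T_C$. I expect this to reduce to checking that a reversing arc in \simfpqord enforces precisely the orientation relationship that two opposite matrix sides have when one walks around $M_v$, which is analogous to the consistent/reversing pole behaviour already used for the embedding DAG in Section~\ref{se:preliminaries}. Once this dictionary between constraint arcs and geometric coherence is spelled out, both directions of the equivalence follow directly from the correspondence between planar rotation systems and planar embeddings of biconnected graphs.
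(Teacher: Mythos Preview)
Your proposal is correct and follows essentially the same approach as the paper: both directions hinge on the observation that contracting each matrix to a point (forward) or expanding each vertex to a matrix with the row/column permutations read off from a solution of $\mathcal{H}(v)$ (backward) preserves planarity precisely when the rotation at $v$ lies in $\sol(\mathcal{H}(v))$. The only presentational difference is that for the backward direction the paper inserts an intermediate planar gadget graph $\hat{G_F}$ (replacing each $\mathcal{H}(v_i)$ by a wheel-based subgraph $W^i$) before substituting matrices, whereas you go directly from the embedding of $G_F$ to the matrices via a tube-routing argument; both realize the same idea.
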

\begin{proof}
	Let $(G,S)$ be a biconnected \nodetrix graph with fixed sides, let $G_F=(V,E)$ be its equipped frame graph, and let $\mathcal{D}$ be the embedding DAG of $G_F$. For each $1 \le i \le |V|$, let $v_i$ be a vertex of $G_F$, let $\mathcal{H}(v_i)$ be its constraint DAG, and let $\mathcal{D}(v_i)$ be its embedding DAG.
	
	If $G_F$ is $1$-fixed constrained planar, $G_F$ admits an embedding that simultaneously satisfies the constraints given by the embedding DAG $\mathcal{D}(v_i)$ and the ones given by the constraint DAG $\mathcal{H}(v_i)$, for each $v_i\in V$. Since each $\mathcal{D}(v_i)$ is satisfied, the cyclic orders of the edges around the vertices of $G_F$ describe all the planar combinatorial embeddings of $G_F$. The constraint DAGs associated with each $v_i$ describe the constraints that allow to replace each vertex of $G_F$ with a matrix whose inter-cluster edges are incident to the sides as specified by the input. In particular, for each vertex $v_i\in G_F$ we replace its constraint DAG $\mathcal{H}(v_i)$ by a gadget $W^i$ that is built as follows (refer to Fig.~\ref{fi:constraintdag-gadget}). The F-node $\chi$ of the source tree $T_{M_i}$ is replaced with a wheel $H_\chi$ whose external cycle has four vertices $v_S$, where $S \in \{\texttt{T}, \texttt{R}, \texttt{B}, \texttt{L}\}$, one for each edge incident to $\chi$.
	Each P-node $\pi_S$ that is adjacent to $\chi$ in the source tree $T_{M_i}$ is represented in $W^i$ as a vertex $v_{\pi_S}$ that is connected to the corresponding $v_S$ ($S \in \{\texttt{T}, \texttt{R}, \texttt{B}, \texttt{L}\}$).
	Each node $\pi_S$ is adjacent to $k$ P-nodes $\rho_S^1, \dots, \rho_S^k$ in $T_{M_i}$. Each $\rho_S^j$ ($1 \le j \le k$, $S \in \{\texttt{T}, \texttt{R}, \texttt{B}, \texttt{L}\}$) is represented in $W^i$ as a vertex $v_S^j$. Finally, for each edge incident to $\rho_S^j$ in $T_{M_i}$ there is in $W^i$ an edge called \emph{spoke} incident to $v_S^j$.
	For example, Fig.~\ref{fi:constraintdag-gadget-b} shows the gadget corresponding to the constraint DAG of Fig.~\ref{fi:constraintdag-gadget-a}.
	
	Note that the trees $T_R$ and $T_C$ fix the permutations of the children of the two pairs of P-nodes $(\pi_\texttt{T},\pi_\texttt{B})$ and $(\pi_\texttt{R},\pi_\texttt{L})$ of $T_{M_i}$, that are guaranteed to be coherent because the constraints described by $\mathcal{H}(v_i)$ are satisfied. In $W^i$, the permutations of the pairs $(v_{\pi_\texttt{T}},v_{\pi_\texttt{B}})$ and $(v_{\pi_\texttt{R}},v_{\pi_\texttt{L}})$ are fixed consistently. Also, such permutations lead to a planar embedding because they satisfy the embedding DAG $\mathcal{D}(v_i)$.
	
	\begin{figure}[tb]
		\centering
		\subfigure[\label{fi:constraintdag-gadget-a}{}]
		{\includegraphics[width=.48\textwidth,page=1]{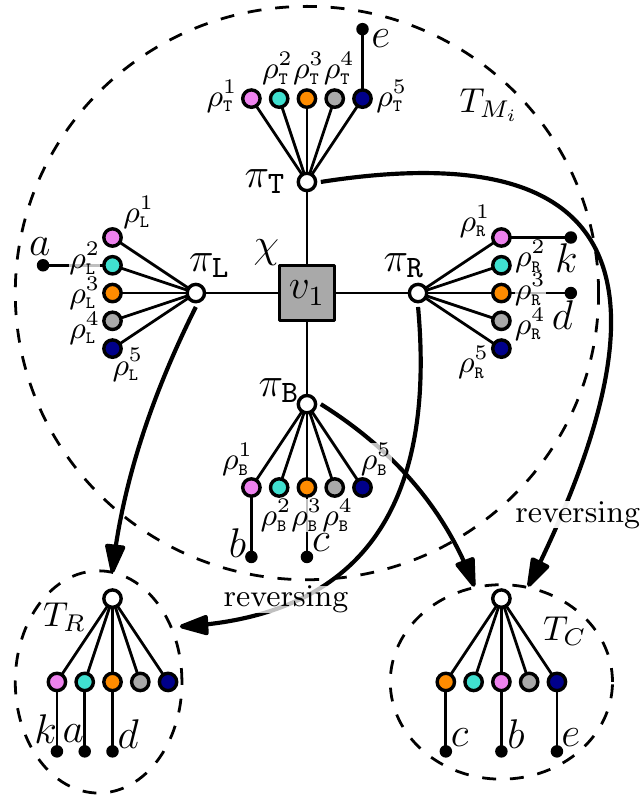}}
		\hfill
		\subfigure[\label{fi:constraintdag-gadget-b}{}]
		{\includegraphics[width=.48\textwidth,page=2]{constraintdag-gadget}}
		\caption{(a) The constraint DAG $\mathcal{H}(v_1)$ of Fig.~\ref{fi:matrix-simpq-b}. (b) The gadget $W^i$ replacing $\mathcal{H}(v_1)$.}
		\label{fi:constraintdag-gadget}
	\end{figure}
	
	By performing such a replacement for each vertex $v_i\in G_F$ and by connecting the spokes of the gadgets that correspond to the same edge, we obtain a planar graph $\hat{G_F}$.	
	In order to obtain a planar NodeTrix representation, we replace each gadget $W^i$ by a matrix as follows.
	
	Let $v_\texttt{T}$, $v_\texttt{R}$, $v_\texttt{B}$, $v_\texttt{L}$, be the vertices that are encountered by walking clockwise along the cycle of the wheel of $W^i$. Let $v_\texttt{T}^1, \dots, v_\texttt{T}^k$ be the clockwise order of the children of $v_{\pi_\texttt{T}}$, let $v_\texttt{R}^1, \dots, v_\texttt{R}^k$ be the clockwise order of the children of $v_{\pi_\texttt{R}}$, let $v_\texttt{B}^k, \dots, v_\texttt{B}^1$ be the clockwise order of the children of $v_{\pi_\texttt{B}}$, and let $v_\texttt{L}^k, \dots, v_\texttt{L}^1$ be the clockwise order of the children of $v_{\pi_\texttt{L}}$ in $W^i$. Replace $W^i$ by a matrix $M_i$ whose vertices are placed according to the permutations described for the columns and for the rows. The spokes of $W^i$ that are adjacent to $v_\texttt{T}^j$ ($1 \le j \le k$) are connected to $v_j$ on the top side of $M_i$. Analogously, the spokes of $W^i$ that are adjacent to $v_\texttt{R}^j$, $v_\texttt{B}^j$, or $v_\texttt{L}^j$, are connected to $v_j$ on the right, bottom, or left side of $M_i$, respectively.
	By performing this replacement for each gadget of $\hat{G_F}$, we obtain a planar NodeTrix representation $G$ of the $1$-fixed constrained planar graph $G_F$. It follows that, if $G_F$ is $1$-fixed constrained planar, $(G,S)$ is \rci planar.
	
	We now show that if $(G,S)$ is \rci planar, then $G_F$ is $1$-fixed constrained planar.
	Let $\Gamma$ be a planar \nodetrix representation of $G$ whose rows and columns permutations are independent. Replace each matrix $M_i$ of $\Gamma$ by a vertex $v_i$, and connect to it all the inter-cluster edges that are incident to $M_i$.
	We obtain a planar drawing $\Gamma'$ such that the cyclic order of the edges incident to each vertex $v_i$ of $\Gamma'$ reflects the cyclic order of the edges incident to matrix $M_i$ in $\Gamma$.
	Such an order satisfies the constraint DAG $\mathcal{H}(v_i)$ of $G_F$, and it also satisfies the embedding DAG $\mathcal{D}(v_i)$ because $\Gamma'$ is planar. Therefore, $G_F$ is $1$-fixed constrained planar.
\end{proof}

\noindent \textbf{{Testing RCI-NT Planarity:}} Based on Lemma~\ref{le:interplay2}, we shall test whether $(G,S)$ is \rci planar by testing whether $G_F$ is $1$-fixed constrained planar.

Observe that $\mathcal{H}(v)$ and $\mathcal{D}(v)$ have the same leaves, since they describe possible cyclic orders for the same set of inter-cluster edges, namely those incident to the matrix $M_v$ associated with $v$ in $G_F$, hence $\mathcal{H}(v)$ and $\mathcal{D}(v)$ are joinable instances of \simfpqord. Graph $G_F$ is $1$-fixed constrained planar if and only if it admits a planar embedding such that, for each vertex $v$ the cyclic order of the edges incident to $v$ satisfies both the constraints given by $\mathcal{H}(v)$ and the ones given by $\mathcal{D}(v)$. These constraints are described by the join DAG $\mathcal{J}(v)$ of $\mathcal{H}(v)$ and $\mathcal{D}(v)$ (i.e., $\mathcal{J}(v) = \mathcal{H}(v) \Join \mathcal{D}(v)$).
%
%
The following property is implied by  Property~\ref{pr:embeddingDAGlength1}, Property~\ref{pr:constraintDAGlength1}, and Lemma~\ref{le:intersection-2fixed}.

\begin{property}\label{pr:intersectionDAGlength1}
	For each vertex $v$ of $G$, $\mathcal{J}(v)$ is $2$-fixed.
\end{property}

We can now exploit Theorem~\ref{th:constrained-embedding-quadratic}, and hence we can decide in $O(n_F^2)$ time whether $G_F$ is $1$-fixed constrained planar, where $n_F$ is the number of vertices of $G_F$. By Lemma~\ref{le:interplay2},
and since constructing $G_F$ may require $O(n^2)$ time,
the following theorem holds.

\begin{restatable}{theorem}{theoremRcintQuadratic}\label{th:rcint-quadratic}
	Let $(G,S)$ be a biconnected \nodetrix graph. \rcintplanarity can be tested in $O(n^2)$-time, where $n$ is the number of vertices of $G$.
\end{restatable}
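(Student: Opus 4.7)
The plan is to assemble the quadratic-time algorithm from the three ingredients already established: the modelling result of Lemma~\ref{le:interplay2}, the complexity bound of Theorem~\ref{th:constrained-embedding-quadratic}, and the structural Property~\ref{pr:constraintDAGlength1}. By Lemma~\ref{le:interplay2}, deciding whether $(G,S)$ is \rci planar is equivalent to deciding whether the equipped frame graph $G_F$ is $1$-fixed constrained planar, so it suffices to build $G_F$ together with the correct constraints and then invoke Theorem~\ref{th:constrained-embedding-quadratic}.

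Concretely, I would proceed in three steps. First, construct the equipped frame graph $G_F$ by collapsing each cluster of $S$ to a single vertex and keeping a parallel edge for every inter-cluster edge of $G$; this takes $O(n^2)$ time in the worst case, which matches the bound stated in the theorem and accounts for the overall runtime. Second, for each vertex $v$ of $G_F$ construct the constraint DAG $\mathcal{H}(v)$ as described in Section~\ref{sse:rcint}, whose source is the FPQ-tree $T_M$ (one F-node with four P-node children) and whose two further nodes $T_R$ and $T_C$ encode the row and column coherence; the number of leaves across these trees is proportional to the degree of $v$ in $G_F$, so the total size of $\bigcup_{v} \mathcal{H}(v)$ is linear in the number of inter-cluster edges and hence in the size of $G_F$. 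By Property~\ref{pr:constraintDAGlength1}, each $\mathcal{H}(v)$ is a $1$-fixed constraint of P-degree at most $2$, so the family $\{\mathcal{H}(v)\}_{v}$ satisfies exactly the hypotheses of Theorem~\ref{th:constrained-embedding-quadratic}.

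Third, run the algorithm of Theorem~\ref{th:constrained-embedding-quadratic} on $G_F$ equipped with the constraints $\mathcal{H}(v)$. This yields a decision on the $1$-fixed constrained planarity of $G_F$ in $O(n_F^2)$ time, where $n_F=|S|\le n$. By Lemma~\ref{le:interplay2}, a ``yes'' answer corresponds exactly to an \rci planar representation of $(G,S)$ and a ``no'' answer rules it out. Summing the $O(n^2)$ cost of constructing $G_F$, the linear cost of constructing the constraint DAGs, and the $O(n_F^2)=O(n^2)$ cost of the constrained planarity test yields the claimed $O(n^2)$ overall bound.

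I do not anticipate any real obstacle: the heavy lifting has been done in Lemma~\ref{le:interplay2}, Property~\ref{pr:constraintDAGlength1}, and Theorem~\ref{th:constrained-embedding-quadratic}. The only point that deserves explicit mention in the write-up is that although each individual $\mathcal{H}(v)$ is $1$-fixed, the join with the embedding DAG $\mathcal{D}$ of $G_F$ is only $2$-fixed (this is Property~\ref{pr:intersectionDAGlength1}); this is, however, already absorbed into the statement of Theorem~\ref{th:constrained-embedding-quadratic} via Lemmas~\ref{le:intersection-2fixed} and~\ref{le:fixedness-notincreasing}, so no additional argument is needed here beyond citing the theorem.
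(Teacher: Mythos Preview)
Your proposal is correct and follows essentially the same route as the paper: the paper's proof is just the short paragraph preceding the theorem, which invokes Lemma~\ref{le:interplay2} to reduce to $1$-fixed constrained planarity of $G_F$, appeals to Property~\ref{pr:constraintDAGlength1} so that Theorem~\ref{th:constrained-embedding-quadratic} applies, and notes the $O(n^2)$ cost of building $G_F$. Your write-up simply spells out these steps in more detail (including the remark that the $2$-fixedness of the join is already handled inside Theorem~\ref{th:constrained-embedding-quadratic}), which is fine.
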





\subsection{PolyLink Planarity Testing}\label{sse:other-hybrid}

An \rci planar graph has a planar \nodetrix representation where the inter-cluster edges are incident to different sides of a $4$-gon, and there are constraints that impose the vertices on opposite sides to respect the same permutation. We generalize this type of representation by replacing the $4$-gons with $\sigma$-gons having an even number of sides.


\begin{figure}[tb]
	\centering
	\subfigure[\label{fi:polylink-kp-1}{}]
	{\includegraphics[width=.47\textwidth,page=1]{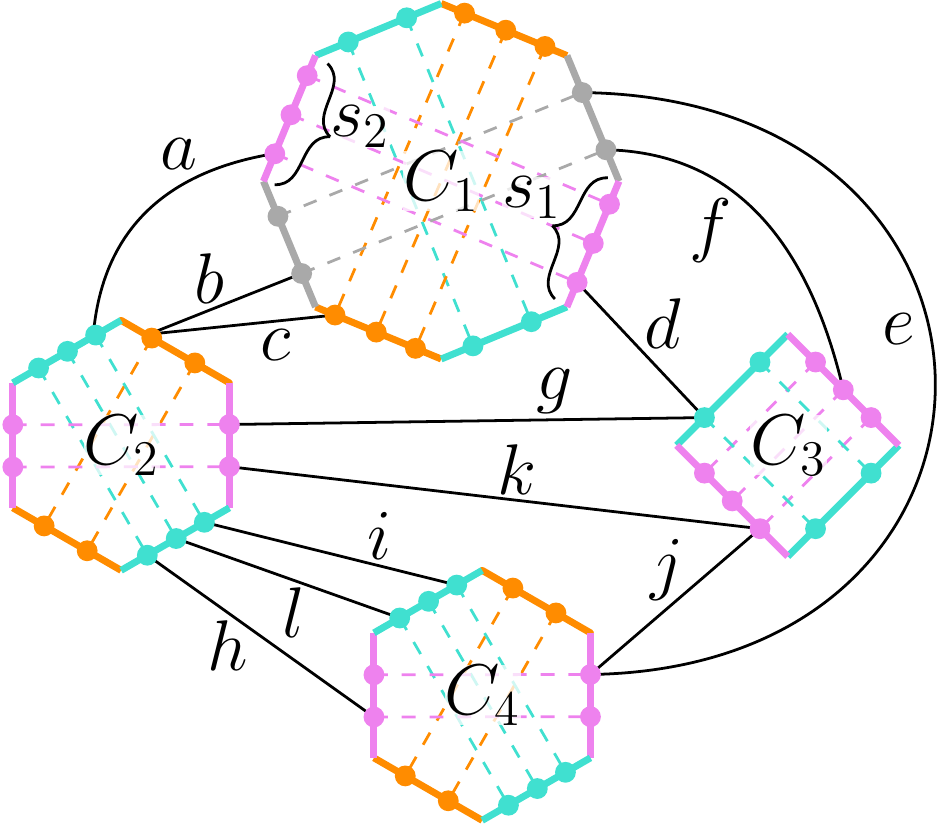}}
	\hfil
	\subfigure[\label{fi:polylink-kp-2}{}]
	{\includegraphics[width=.47\textwidth,page=3]{polylink-kp}}
	\caption{(a) A planar \polylink representation of a graph. Dashed segments connect two copies of a same vertex. (b) The constraint DAG $\mathcal{H}(v_1)$ associated with cluster $C_1$.}
	\label{fi:polylink-kp}
\end{figure}

A \emph{\polylink representation} of a flat clustered graph $(G,S)$ is such that each cluster $C$ in $S$ is represented as a polygon $P_C$ with an even number $\sigma_C$ of sides.
Each side of $P_C$ is associated with its antipodal side along the boundary of $P_C$. We group the set of vertices of $C$ into disjoint subsets; each subset is associated with at least one pair of opposite sides of $P_C$. Let $(s_1,s_2)$ be a pair of opposite sides of $P_C$ and let $u_1,\dots,u_s$ be the vertices of $G$ associated with $(s_1,s_2)$. A vertex $u_i$ ($1 \le i \le s$) is represented by a point on $s_1$ and a point on $s_2$; also, when walking clockwise around $P_C$ the vertices along $s_1$ are encountered in opposite order with respect to the vertices along $s_2$. For each inter-cluster edge $e=(u,v)$ such that $u$ is associated with $(s_1,s_2)$ and $v$ is associated with $(s_3,s_4)$, it is specified to which copy of $u$ (the one that lies on $s_1$ or the one that lies on $s_2$) and to which copy of $v$ (the one on $s_3$ or the one on $s_4$) edge $e$ must be incident.

A \polylink representation is planar if no two inter-cluster edges cross. Figure~\ref{fi:polylink-kp-1} shows an example of a planar \polylink representation.

A flat clustered graph is a \emph{\polylink planar graph} if it admits a planar \polylink representation. We can test a flat clustered graph for \polylinkplanarity by generalizing the approach of Section~\ref{sse:rcint}. Namely, the constraint DAG $\mathcal{H}(v)$ associated with a cluster $C$ of a \polylink graph has a source $T_{P}$ that is an FPQ-tree, which consists of an F-node with $\sigma_C$ incident P-nodes, instead of four as in the case of \rcintplanarity. Each of such P-nodes describes the possible permutations of the vertices belonging to a side of the polygon $P_C$ representing $C$. For each pair $(s_1, s_2)$ of sides, the coherence between the order of the vertices belonging to $(s_1, s_2)$ is encoded by means of an FPQ-tree that is adjacent to the corresponding P-nodes of the source $T_{P}$. Figure~\ref{fi:polylink-kp-2} shows the constraint DAG $\mathcal{H}(v_1)$ associated with cluster $C_1$ of the graph in Figure~\ref{fi:polylink-kp-1}.
We say that a flat clustered graph having a \polylink representation is a biconnected \polylink graph if its equipped frame graph is biconnected.
The same argument used to test \rcintplanarity leads to the following.

\begin{theorem}\label{th:polylink}
	Let $(G,S)$ be a biconnected \polylink graph. \polylinkplanarity can be tested in $O(n^2)$ time, where $n$ is the number of vertices of $G$.
\end{theorem}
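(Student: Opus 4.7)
The plan is to mirror the proof of Theorem~\ref{th:rcint-quadratic} and reduce \polylinkplanarity on a biconnected \polylink graph $(G,S)$ to \onefixed on its equipped frame graph~$G_F$, which is obtained by collapsing every cluster~$C$ to a single vertex~$v_C$. Since $(G,S)$ is biconnected, so is $G_F$, and Theorem~\ref{th:constrained-embedding-quadratic} will apply once we exhibit a $1$-fixed constraint $\mathcal{H}(v_C)$ at each vertex of~$G_F$ whose solutions correspond to the legal cyclic orders in which inter-cluster edges can leave the polygon~$P_C$.

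First I would formalize the constraint DAG $\mathcal{H}(v_C)$ for a cluster $C$ with polygon $P_C$ of $\sigma_C$ sides, as sketched around Figure~\ref{fi:polylink-kp-2}. The source FPQ-tree $T_P$ has an F-node with $\sigma_C$ incident P-nodes $\pi_1,\dots,\pi_{\sigma_C}$, one per side, whose leaves are exactly the inter-cluster edges incident to that side (so the leaf set of $T_P$ is precisely $E(v_C)$ in $G_F$). The F-node enforces the cyclic order of the sides along $\partial P_C$. For each pair of antipodal sides $(s_i,s_{i+\sigma_C/2})$ sharing a common subset of vertices, I would add a single sink FPQ-tree that is adjacent to both $\pi_i$ and $\pi_{i+\sigma_C/2}$, one of the two arcs being labeled reversing, thereby encoding that the two copies of each vertex appear in opposite orders along the two antipodal sides. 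This instance has height $1$ and, since each sink is the child of exactly the two P-nodes corresponding to its pair of sides, it has P-degree at most $2$. For each source P-node $\pi_i$, only the sinks attached to its antipodal partner can fix~$\pi_i$, and there is exactly one such child, so $\fixed(\pi_i) \le 1$; thus $\mathcal{H}(v_C)$ is a $1$-fixed constraint in the sense of Section~\ref{sse:constr-pl}.

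Next I would prove the \polylink analogue of Lemma~\ref{le:interplay2}: $(G,S)$ is \polylink planar if and only if $G_F$ is $1$-fixed constrained planar under the constraints $\{\mathcal{H}(v_C)\}_{C \in S}$. The forward direction replaces each polygon $P_C$ in a planar \polylink representation by the single vertex $v_C$, carrying over the cyclic order of inter-cluster edges, which by construction satisfies both $\mathcal{H}(v_C)$ and the embedding DAG $\mathcal{D}(v_C)$. The backward direction proceeds exactly as in the proof of Lemma~\ref{le:interplay2}: replace each vertex $v_C$ of $G_F$ by a gadget mimicking $\mathcal{H}(v_C)$ (a wheel whose outer cycle has $\sigma_C$ vertices, one per side, each attached to a P-node subtree whose leaves are spokes that become inter-cluster edge-stubs), note that any planar embedding realizing the constraints fixes pairs of antipodal permutations in mutually reversed orders, and finally substitute each gadget by the polygon $P_C$ with vertices placed according to those permutations.

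Having established this equivalence, the result is immediate: by Theorem~\ref{th:constrained-embedding-quadratic}, testing whether $G_F$ is $1$-fixed constrained planar takes $O(n_F^2)$ time, where $n_F=|S|$; and, as in the \rcintplanarity case, building $G_F$ together with its constraints $\mathcal{H}(v_C)$ takes $O(n^2)$ time overall, since the total number of leaves summed over all constraint DAGs is $O(n)$ and the polygons have at most $O(n)$ sides each. The main technical obstacle is the verification that the generalized $\mathcal{H}(v_C)$ remains a $1$-fixed constraint when $\sigma_C$ grows beyond four: one must check that attaching many antipodal-pair sinks to the source does not raise the P-degree above~$2$ and that every source P-node is still fixed by at most one child. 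Both facts follow from the antipodal pairing being a perfect matching on the sides, but they are the only delicate points; once they are in place, Lemma~\ref{le:intersection-2fixed} and Theorem~\ref{th:constrained-embedding-quadratic} deliver the quadratic bound.
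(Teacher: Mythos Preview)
Your proposal is correct and matches the paper's approach exactly; in fact the paper's own ``proof'' of Theorem~\ref{th:polylink} is the single sentence ``The same argument used to test \rcintplanarity leads to the following,'' so you have supplied considerably more detail than the paper does. Your identification of the only delicate point---checking that the generalized $\mathcal{H}(v_C)$ stays $1$-fixed because the antipodal pairing is a perfect matching on the sides---is precisely the content the paper leaves implicit.
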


Note that if the sides to which the inter-cluster edges must be incident are not specified, \polylinkplanarity is NP-complete. Indeed it becomes equivalent to \nodetrixplanarity with free sides if the polygons have maximum size four and each side is associated with the same set of vertices.

A flat clustered graph $(G,S)$  that admits an intersection-link representation is an \emph{intersection-link graph}.
Let $(G,S)$ be an intersection-link graph where $S$ is a partition of the vertices of $G$ and each cluster of $S$ is a clique. We recall that the clique planarity problem asks whether $(G,S)$ admits an intersection-link representation where no two inter-cluster edges cross. In~\cite{addfpr-ilrg-17} it is proved that if $(G,S)$ is clique-planar, then it admits a canonical intersection-link representation, i.e., an intersection-link representation where all vertices in a same cluster are isothetic unit squares whose upper-left corners are aligned along a line of slope one.

\begin{figure}[tb]
	\centering
	\subfigure[\label{fi:polylink-intersection-link-1}{}]
	{\includegraphics[width=.47\textwidth,page=2]{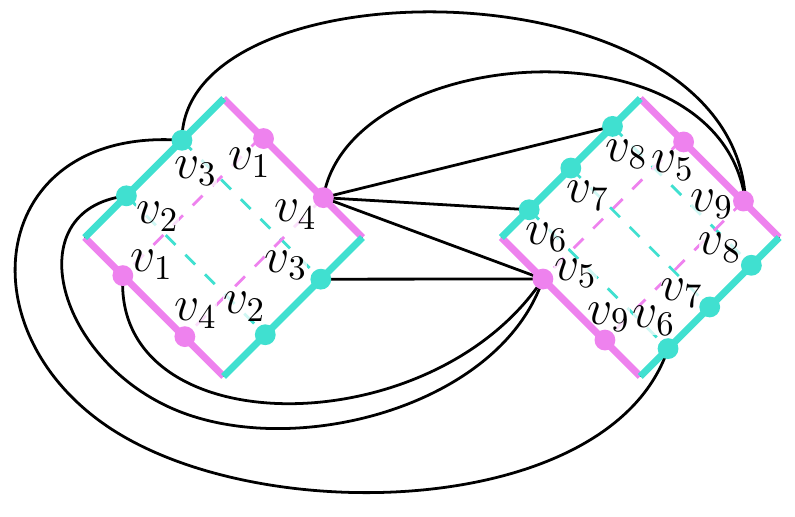}}
	\hfil
	\subfigure[\label{fi:polylink-intersection-link-2}{}]
	{\includegraphics[width=.47\textwidth,page=1]{polylink-intersection-link}}
	\caption{(a) A canonical representation of an intersection-link graph $(G,S)$. (b) The corresponding \polylink representation.}
	\label{fi:polylink-intersection-link}
\end{figure}

Considering a canonical representation of an intersection-link graph, by walking along the boundary $B_C$ of a cluster $C$ of size $k$ (with $k\ge 3$), the cyclic order of its vertices is such that $k-2$ vertices appear twice and in opposite order, and these two sequences of vertices are separated by two single vertices that appear only once along $B_C$. We can hence model such an intersection-link graph as a \polylink graph where each cluster is a polygon with four sides: A pair of sides contains $k-2$ vertices, while the other two sides contain two vertices, each of which has incident inter-cluster edges only in one of the two sides.

An instance of \emph{clique planarity with fixed sides} specifies, for each inter-cluster edge $e=(u,v)$, the two sides of the unit squares representing $u$ and $v$ to which $e$ is incident.
See Figure~\ref{fi:polylink-intersection-link} for an example.


An intersection-link graph is biconnected if its equipped frame graph is biconnected.
By exploiting the relation between \polylink graphs and intersection-link graphs, the following corollary holds.



\begin{corollary}\label{co:inters-polylink}
	Let $(G,S)$ be a biconnected intersection-link graph. Clique planarity with fixed sides can be tested in $O(n^2)$ time, where $n$ is the number of vertices of $G$.
\end{corollary}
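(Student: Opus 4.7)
The plan is to reduce clique planarity with fixed sides on a biconnected intersection-link graph $(G,S)$ to \polylinkplanarity on an associated biconnected \polylink graph, and then invoke Theorem~\ref{th:polylink}.

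First I would recall the structural characterization from~\cite{addfpr-ilrg-17}: a clique-planar flat clustered graph always admits a canonical representation in which the vertices of each cluster $C$ of size $k$ are isothetic unit squares whose upper-left corners lie on a common line of slope one. Walking along the boundary $B_C$ of the union of these squares yields a cyclic sequence in which $k-2$ vertices appear twice (once on the upper-right staircase and once, in reverse order, on the lower-left staircase) while the two extreme vertices appear only once. This is exactly the combinatorial behavior of a $4$-gon in a \polylink representation in which one pair of opposite sides hosts the $k-2$ vertices (with the required reversed order on antipodal sides) and the other pair of opposite sides hosts one extreme vertex each as a singleton. I would formalize this correspondence as a bijection between canonical intersection-link representations of $(G,S)$ (with its prescribed fixed sides) and planar \polylink representations of an associated flat clustered graph $(G,S')$, where the $\sigma_C=4$ partition of each cluster into sides is dictated by the canonical layout and the side incidences of inter-cluster edges are precisely the ones given by the clique-planarity input with fixed sides.

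Next I would verify that this correspondence preserves the relevant planarity and structural properties. Planarity transfers directly: an inter-cluster edge incident to a given side of a unit square corresponds to an inter-cluster edge incident to the corresponding side of the $4$-gon, so a crossing-free intersection-link drawing yields a crossing-free \polylink drawing, and vice versa. Moreover, the equipped frame graph is identical in the two models, since it is obtained by collapsing clusters in exactly the same way; therefore $(G,S)$ is biconnected as an intersection-link graph if and only if $(G,S')$ is biconnected as a \polylink graph. This means the preconditions of Theorem~\ref{th:polylink} are inherited.

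Finally, I would invoke Theorem~\ref{th:polylink} on $(G,S')$ to obtain an $O(n^2)$-time test, observing that constructing $(G,S')$ from $(G,S)$ is done in linear time per cluster by inspecting the canonical partition of its vertices into the four sides determined by the fixed-sides input. The main obstacle I anticipate is the combinatorial bookkeeping in the first step, namely making precise that the fixed-sides specification for unit squares in the canonical drawing translates unambiguously into a fixed-sides specification for the \polylink $4$-gon, and in particular handling the two ``singleton'' extreme vertices correctly so that the biconditional between clique planarity with fixed sides and \polylinkplanarity holds exactly. Once this correspondence is established, the quadratic-time bound follows immediately from Theorem~\ref{th:polylink}.
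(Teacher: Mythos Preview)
Your proposal is correct and follows essentially the same route as the paper: the paper also reduces clique planarity with fixed sides to \polylinkplanarity by modeling each canonical cluster boundary as a $4$-gon whose two pairs of opposite sides carry, respectively, the $k-2$ doubly-appearing vertices and the two extreme vertices, and then invokes Theorem~\ref{th:polylink}. Your write-up is slightly more explicit about why planarity and biconnectivity transfer across the reduction, but the underlying argument is the same.
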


We remark that clique planarity is NP-complete if the sides to which the inter-cluster edges are incident are not fixed~\cite{addfpr-ilrg-17}.

\section{Open Problems}\label{se:conclusions}

The research in this paper suggests the following open problems:
	(i) Our hybrid planarity testing results assume the equipped frame graph to be biconnected. It would be interesting to study the connected case.
	(ii) What is the complexity of the \rowcolumn testing problem in the free sides scenario?
	(iii) Finally, it would be interesting to validate the \rci paradigm and the \polylink paradigm with user studies that compare them with the \nodetrix paradigm. Indeed, the class of  \rci planar graphs is a proper superclass of the planar \nodetrix graphs and if their readability is not significantly worse than in the standard \nodetrix model, the fact that they can be tested in polynomial time could be the starting point for developing new efficient visualization systems.

\bibliographystyle{splncs04}
\bibliography{biblio}

\end{document}